\title{Scheduling Mechanisms to Control Spread of Covid-19}
\author
{John Augustine \thanks{Department of Computer Science and Engineering, Indian Institute of Technology at Madras, Chennai, Tamil Nadu, 600036, India. Email: \href{mailto: augustine@iitm.ac.in}\texttt{augustine@iitm.ac.in}.}  
\and Khalid Hourani \thanks{Department of Computer Science, University of Houston, Houston, TX 77204, USA. Email: \href{mailto: khalid.hourani@gmail.com}\texttt{khalid.hourani@gmail.com.}} \and
Anisur Rahaman Molla \thanks{Indian Statistical Institute, Kolkata, 700108, India. Email: \href{mailto: molla@isical.ac.in}\texttt{molla@isical.ac.in}.}
\and Gopal Pandurangan \thanks{Department of Computer Science, University of Houston, Houston, TX 77204, USA. Email: \href{mailto: gopalpandurangan@gmail.com}\texttt{gopalpandurangan@gmail.com}.}
\and Adi Pasic \thanks{
 Department of Computer Science, University of Houston, Houston, TX 77204, USA. Email: \href{mailto:adijpasic@gmail.com  }\texttt{adijpasic@gmail.com}.}
}
\begin{document}
\maketitle

\begin{abstract}

    We study scheduling mechanisms 
    that explore the trade-off between containing
    the spread of COVID-19 and resuming in-person activity in organizations.
    We analyze the trade-offs achievable through our mechanisms via simulation and mathematical analysis.

    Our mechanisms, referred to as \emph{group scheduling}, are based
    on partitioning the population \emph{randomly} into groups and scheduling each group on appropriate days with possible gaps (when no one is working and all are quarantined).
    Each group interacts with no other group and, importantly,
    any person who is symptomatic in a group is quarantined.
    Specifically, our mechanisms are characterized by three parameters $g$, the number of groups, $d$, the number of days a group is continuously scheduled, and $t$, the gap between cycles.

    We  show that our mechanisms effectively trade-off in-person activity for more effective control of the COVID-19 virus spread.
    In particular, we show that the $(2,5,0)$ mechanism,  which partitions the population into two groups that alternatively work in-person for five days each, flatlines the number of COVID-19 cases quite effectively, while still maintaining  in-person activity at 70\% of pre-COVID-19 level.
    Mechanisms such as (2,3,2) and (3,3,0) achieve even more aggressive control of the virus at the cost of a somewhat lower in-person activity (about 50\%).
    Depending on the disease spread, one can use an appropriate mechanism that achieves a desired control of the virus at a certain level of in-person activity.
    We demonstrate the efficacy of our mechanisms by theoretical analysis
    and extensive experimental simulations on various epidemiological models based on real-world data.

\end{abstract}

\newpage
\section{Introduction}\label{sec:introduction}

The COVID-19 pandemic that is currently sweeping the world has already spread to a large number of people. In many parts of the world
it has already infected a significant fraction of the population.
For example, in New York City, as early as June 2020, antibody testing suggests that as many
as quarter of the population might be infected~\cite{nyccovid}. However,
this is still nowhere near the fraction required for ``herd immunity''.
Many large and medium sized organizations like schools, universities, factories, and businesses  have (at least partially) reopened  (or soon planning to open)  their businesses.  
In some instances, there have been closures of schools and universities and businesses after reopening due to spike in cases (possibly due to the emergence of new virus strains) leading again to lockdowns and closures and the cycle keeps repeating. Thus one needs effective strategies to reopen and to keep opened organizations functioning safely for a longer time, by keeping the virus under check. 
It is therefore important to study effective non-pharmaceutical intervention mechanisms that can safely reopen human society. Such mechanisms may significantly
help in containing, controlling, and slowing the spread of COVID-19, even
though it may not fully eliminate it.
Moreover, since organizations, cities, and communities have differing levels of disease spread, it is crucial that policymakers weigh in the trade off between containing the spread of the disease versus the impact on productivity.

Our main contribution is the study of a class of intervention mechanisms called \emph{group scheduling}. The inspiration for group scheduling comes from COVID-19 characteristics whereby individuals remain asymptomatic and less infectious for around 4-5 days (the incubation period) from contraction. Subsequently, they either become symptomatic (and can therefore be quarantined) or remain asymptomatic (and still spread the disease).
Group scheduling (randomly) partitions the population (e.g. the students in a university or the work force in a company) and schedules them to work \emph{in-person}\footnote{Henceforth, when we say ``work", it means 
working in-person or in a face-to-face setting. This contrasts with working or performing activities via other mechanisms, e.g., remotely or online.} on different days with possible gaps (i.e., when no group is scheduled).
A group is considered quarantined (at home, say)  when it is not scheduled to work. Any individual who is symptomatic is quarantined as soon as symptoms are exhibited.

The key intuition that inspires group scheduling is that individuals can be grouped in a random fashion ---  reducing the average number of contacts  than without grouping  --- and scheduled to work  predominantly with other less infectious individuals. Effective group schedules work in such a way that most infected individuals turn symptomatic (though a significant percentage, about 40\%, may remain asymptomatic~\cite{cdc}) and contagious during their break -- thereby facilitating quarantining before spreading the infection. Contrary to a full lockdown, our scheduling allows for significant and sustained in-person activity. In fact, we showcase specific group schedules that operate at 70\%  of a typical five-day (in-person) work week and that can simultaneously dampen the spread of COVID-19 quite effectively.

The main contributions of this paper are summarized as follows:
\begin{enumerate}
    \item We posit and study group scheduling mechanisms
          that provide trade-offs between
          in-person activity
          and controlling disease spread (these notions are explained
          in Section~\ref{sec:mechanism}).
          We show that our mechanisms effectively trade off in-person activity for more effective control of the COVID-19 virus. This interpolates between two extremes: full lockdowns where there is very little in-person
          activity but with stronger control of the virus verses almost full  activity (as in pre-Covid days) but with very little control of the virus.
           
    \item We analyze various mechanisms both
          theoretically and by simulation. Our theoretical analysis
          demonstrates the ways in which group scheduling mechanisms help in controlling COVID-19.
          Our simulations validate the theory and yield  insights into
          the performance of specific mechanisms.
          Our results show that both the peak number of cases per day as well as the total number of infections can be significantly controlled by following appropriate mechanisms.
          
    \item Our results indicate three specific categories
          of mechanisms --- corresponding to high, medium, and low
          in-person activity --- and their performance in controlling
          COVID-19 spread. The lower the in-person activity of the mechanism,
          the higher the control of disease. For specific mechanisms
          we refer to Table~\ref{tbl:comparison}.
          
          Mechanisms such as (2,3,2) and (3,3,0) achieve even more aggressive control of the virus at the cost of a somewhat lower in-person activity (about 50\%); these could be applicable in situations when the disease spread is more rampant in the population. Depending on the disease spread, one can use an appropriate mechanism that achieves a desired control of the virus at a certain level of in-person activity.

    \item A main takeaway from our results is that group scheduling
          mechanisms  help in significantly controlling the spread
          of COVID-19 --- reducing the peak number of
          cases per day as well as the total number of infections.
          Depending on the rate of infection in a population,
          different mechanisms are applicable that
          control the disease spread while maintaining an appropriate
          level of in-person activity.
          Our mechanisms  provide a basis for safely reopening in-person activity of large organizations such as universities and schools where such mechanisms can be effectively implemented. 
          
\end{enumerate}
\section{Baseline and Proposed Mechanisms}\label{sec:mechanism}

We begin with some simple  mechanisms that serve as baselines. The \emph{basic mechanism} is one in which there is no intervention or control mechanism of any sort. The disease spreads in the population according to an underlying disease propagation model. (We consider various epidemic models as described in Section~\ref{sec:model}.)
Another baseline mechanism is \emph{symptomatic quarantining}, a widely practised mechanism wherein individuals are quarantined if they exhibit any symptom of the disease. As mentioned in Section~\ref{sec:disease-model}, we assume that
only a certain proportion of infected individuals are symptomatic.
According to current CDC estimates~\cite{cdc}, about 60\% of infected individuals are
symptomatic and the rest are asymptomatic. We also consider other percentages to validate the robustness of our results.

We assume a 5 day mean incubation period (in our experiments,  we consider a distribution based incubation period model for COVID-19 with mean incubation
period of 5 days~\cite{covidincu}). Individuals who exhibit symptoms
are removed (quarantined) from the group.
In this model, an individual will be asymptomatic with probability 0.4 (independently of others) and thus will not be removed until
recovery time (assumed to be 14 days after infection); such an individual will remain contagious until they are recovered.

\subsection{Proposed Group Scheduling Mechanisms}
We  study a family of \emph{group scheduling} mechanisms and highlight specific mechanisms within the family -- one, in particular, that epitomizes the optimal trade off between dampening COVID-19 spread and increasing in-person activity.
Group scheduling partitions the population into different \emph{randomly} chosen groups and schedules each group on different days with possible gaps between the schedules. A gap is when no group is scheduled.
More precisely, a group scheduling mechanism is characterized by three parameters $g$, $d$, and $t$,
where $g$ is the number of groups, $d$ is number of days a group is continuously scheduled, and $t$ is the gap between the schedules.    We call this a
\emph{$(g,d,t)$ schedule}. The normal five day work week schedule is $(1,5,2)$, where the entire workforce is
just one group scheduled continuously for five days with two days off. A quintessential group schedule example that we highlight is the $(2,5,0)$ schedule
($(2,4,0)$ also has quite similar performance  --- see below). This schedule partitions individuals into two groups scheduled alternatively for five days each without any break between cycles. Thus, individuals cycle between five days of work and five days (quarantined) at home.

We evaluate our mechanisms in two main aspects:
\begin{enumerate}
    \item  How disease spreads under the mechanism and comparing it with simple baseline mechanisms. In particular, we posit the so-called \emph{flattening ratio} which is the ratio of the peak number of cases (during
          the course of the epidemic simulation) under our mechanism compared to the
          peak number of cases under a baseline mechanism, e.g., the  normal 5-day work week.
    \item The in-person activity  of the mechanism
          characterized by what we call the  \emph{in-person work ratio} or simply \emph{work ratio}
          defined as the \emph{ratio} of the average number of in-person working hours of an individual under our mechanism to
          the average number of working hours under the standard (pre-Covid) five-day working week (i.e., the $(1,5,2)$ schedule).\footnote{We assume
              that an individual works for a fixed number of in-person hours (say, 8) on each day they are working.} For a mechanism with parameters \((g, d, t)\), the work ratio is computed by the formula (see Appendix for a  derivation): \[\frac{7}{5}\left(\frac{d}{gd + t}\right)\label{def:economicratio}\]
          The work ratio for a normal work week, i.e., the $(1,5,2)$ schedule,
          is 100\%, whereas for the $(2,4,0)$ (and $(2,5,0)$) schedule it is 70\%.
\end{enumerate}

There are two advantages in group scheduling:
\begin{enumerate*}[label=(\arabic*)]
    \item the \emph{average} number of contacts per group is reduced by a factor of $1/g$ (compared to a single group) which means that fewer individuals are infected by an infected person on average per day;
    \item since infected symptomatic individuals in a group are quarantined when they are not scheduled,  the number of days a person is infectious is reduced.
\end{enumerate*}
Thus even when the number of groups is relatively small (say 2, 3, or 4)
and even for small $d$ and $t$ values, the
spread of disease is significantly reduced, while still
maintaining a reasonable work ratio.

We demonstrate the efficacy of our  mechanisms both theoretically (cf. Section~\ref{sec:theory}) and experimentally (cf. Section~\ref{sec:results}). For our
theoretical analysis, we consider a simple  branching process model and analyze how the disease
spreads as a function of the mechanism parameters ($g$, $d$, and $t$) and the COVID-19 disease
parameters. Our analysis is fruitful in determining which mechanisms are likely to
work well and also provides \emph{insight} into our simulation results  on various epidemiological models, which we discuss next.

\section{Models, Assumptions, and Parameters}\label{sec:model}

Infectious diseases such as COVID-19 spread by contact between people. While various factors influence
the spread of diseases, including COVID-19, we focus on a key ingredient that contributes to the spread
of disease: the number of contacts between people and the distribution of the number of contacts (some people may have significantly more contacts than average).%
We simulate our mechanisms under two different types of standard epidemiological models: network-based, random-mixing based~\cite{keeling-eames}. We use real-world data in our epidemiological models.
Network-based models use an underlying \emph{fixed graph structure} that determines the disease propagation, while random mixing models such as the traditional
differential equations-based SIR (or SEIR) allow contacts between random individuals in the population (though the average number of contacts might
be the same in both models). \emph{We show that our mechanisms gives
    qualitatively similar benefits regardless of the specific models used in our simulations as well
    as the specific choice of parameters of the respective models.}
These are discussed in detail in
Section~\ref{sec:results}.

\subsection{Network-based Contact  Models}
We use a simple graph-based model that is based on \emph{contact distribution from real-world data}~\cite{2008plos}.
The work of~\cite{2008plos} was a fairly large-scale (involving 7290 individuals and 92,904 contacts),  population-
based survey of ``epidemiologically relevant'' social contact
patterns.

We define a \emph{contact graph} where nodes consist of individuals and edges (assumed to be undirected) denote contacts between them.
To model spatial distribution, we assume that nodes are distributed
randomly in a unit square. This is a variant
of the \emph{random geometric} graph model that has been extensively studied~\cite{MP10}.
In this random geometric graph model
that we call the $G(n,k)$ model, we have $n$ nodes uniformly distributed in a unit area and each node has an edge with
its $k$ closest neighbors. Note that this model has two key features:
\begin{enumerate*}[label=(\arabic*)]
    \item spatial locality ---  edges are between nodes that are in proximity
    \item the small-world property --- neighbors of a node are themselves likely to be connected.
\end{enumerate*}

One might point out that having a uniform distribution of nodes (people) is not really reflective of the real world. While true,
we posit that this is of lesser importance, especially when restricted to modelling densely populated areas such as Manhattan or Dharavi (Mumbai), or a university or a industrial workplace. However, more important is the modelling of the \emph{contacts between people}.\footnote{In any case, as mentioned  earlier,  it is important to point out that the efficacy of the mechanisms
    are qualitatively similar regardless of the models, in particular whether it
    is contact-graph based or random mixing-based. Even in contact-based model, we consider models where geometry is less important, as discussed later. We also get similar results when modelling based on real-world contact data
    from \cite{kissler}, see Section \ref{subsec:real-world-data}.}
Hence, instead of adding an edge between a node and its $k$ closest contacts, where $k$ is fixed for all nodes,
we use the well-studied real-world data for contact distribution due to~\cite{2008plos} to sample the number of contacts
$k(v)$ for each node $v$. For a node $v$, $k(v)$ is its degree in
the contact graph and its neighbors constitute the set of individuals that $v$
can infect directly. The work of~\cite{2008plos} studied the number of contacts for over 7000 people across eight countries
in Europe. This data gives a contact distribution for the number of contacts of each node \emph{per day}. The \emph{mean number of contacts} for a person per day, according to this distribution, is 13.4, which we denote by $C$. We note that this contact distribution is for ``normal'' human behavior (i.e., no social distancing, quarantining, etc.).

We also consider an alternative contact graph model that is based solely on the contact distribution and \emph{ignores the underlying geometry}. We call this $G'(n,D)$ model. It is a variant of a well-studied random graph model called the \emph{Chung-Lu model} that is based on degree distribution~\cite{chung-lu}. Formally, the $G'(n,D)$ model (which is also undirected) is defined as follows.
For each node $v$, we sample the (expected) degree of $v$, $d(v)$, from the contact distribution $D$. We then construct a \emph{random graph} as follows. For each pair of nodes $u$ and $v$, an edge is added between $u$ and $v$ independently with probability
$d(u)d(v)/\sum_u d(u)$. Note that, under this model, the expected degree of $v$ is equal to $d(v)$. We note that one important difference between
the random geometric graph model and the random graph model is that the diameter of the Chung-Lu model is substantially smaller (about logarithmic in the size of the graph) than the random geometric graph.
This means that the disease can potentially spread faster among the population, since there are shorter paths between nodes.

Finally, we also consider a  parameterized hybrid model that interpolates between the random geometric model and the Chung-Lu random graph model, depending on a parameter $p$.
In this model, each node has a degree (say \(d\)) randomly chosen from a distribution. The \(d\) neighbors are then chosen as follows -- with probability \(p\), the \emph{next closest neighbor} (in the geometric sense) is chosen as a neighbor of the node. With probability \(1 - p\), the neighbor is chosen randomly from \emph{all nodes} in a weighted distribution with weights proportional to the degree of each node (in the Chung-Lu sense).

\vspace{-0.1in}
\subsection{COVID-19 Disease Model}\label{sec:disease-model}
We now discuss how we model the spread of disease on the underlying contact graph. We employ a parameter called the \emph{transmission probability}, $T_p$, the probability that an infected node infects each of its uninfected contacts (neighbors in the contact graph) on \emph{any given day}. Note that the $T_p$ value is related to
the commonly used \emph{reproductive rate} (or \emph{effective reproductive rate}) $R(t)$ which measures (on average) the number of individuals an infected individual
infects over the course of his/her infection (at any particular time $t$
during the epidemic). For the contact-graph model, one can approximately relate $T_p$ and $R(t)$ as follows (See Appendix for a derivation):
\[R(t) = (1 - (1-T_p)^{D}) \times C \]
where $C$, as defined earlier, is the average number of contacts per person (we assume this value to be 13.4 throughout this paper based on~\cite{2008plos} data)  and $D$
is the average number of days a person remains infectious (we assume in this paper that this number is  11 days for COVID-19~\cite{imperialcollege}). Note that in the contact graph model,
since the underlying graph structure is fixed, $R(t)$ generally cannot exceed $C$.
For a random mixing model, the relationship between $T_p$ and $R(t)$ is somewhat different (cf. Section
\ref{sec:model}).
For example, currently for Houston, the reproductive rate is around 1 and thus the
$T_p$ value under the above model is (approximately) $0.007$. Generally, the reproductive rate of COVID-19
is estimated to be less than 3~\cite{imperialcollege} which corresponds
to $T_p = 0.022$ (approximately) in the above contact graph model.

While the traditional approach in epidemiological modelling is to predict
disease spread by estimating $T_p$ or $R(t)$ values and fitting these
estimates in a model, we do not estimate these values. Rather, we study our mechanisms under various possible values for $T_p$ and compare the effectiveness of our mechanisms under different possible values with the baseline mechanisms discussed earlier. A high $T_p$ value means that the disease is spreading
rampantly, while a low value means that the disease is spreading relatively slowly.
We show that regardless of the value of \(T_p\), our specific group scheduling mechanisms significantly reduce the infection spread compared to the baseline mechanisms. However, the efficacy increases even further as the $T_p$ values decrease (cf. Section ~\ref{sec:results}).

For modeling the disease progress,
we adopt an \emph{SEIR model}, where individuals can be in four categories.
Initially, all individuals are considered \emph{Susceptible} to the disease.
When a susceptible individual becomes infected, they first enter an \emph{Exposed} state, wherein the individual is not contagious. After a period of time, the individual then enters the \emph{Infected} state.
During the infected state, an individual can transmit the disease with probability $T_p$ per contact per day.
Then, after a certain period (called the \emph{recovery time}), the individual becomes \emph{Removed} (i.e., either recovered or deceased).\footnote{We assume that  the percentage of ``deceased'' is very small compared
to the population and does not affect the disease
spread significantly; hence we ignore this in our simulation and assume that all individuals eventually recover.}
A removed individual cannot spread the infection to its neighbors.

We start with a set of randomly chosen individuals --- called
the \emph{index set} --- infected at the beginning of the simulation. We assume that the size of the index set is proportional
to the \emph{current} size of infected individuals in the population (e.g.,
in Harris County, Texas, it is about 3\% in August 2020~\cite{covid-19-projection}).

The \emph{Incubation period} is the time between becoming infected and the development of symptoms.
On average, it is estimated that, for COVID-19, it is 5.1 days, but can vary  from 2-14 days~\cite{imperialcollege,covidincu}.
For our simulations, we use a distribution-based model for incubation period for COVID-19~\cite{covidincu}: we assume that the incubation period is given by a lognormal distribution with mean (approximately) 5 (days).

We assume that an infected person becomes contagious 2 days before the incubation period~\cite{imperialcollege}.  After a person becomes contagious, they can infect each
of their contacts with probability $T_p$ per day. We consider various values of $T_p$ ranging from very high (say, 0.5) to relatively low (say, 0.01) --- as mentioned earlier, one can directly relate $T_p$
to the  reproductive rate. After 14 days recovery time, an individual is Removed (i.e., either recovered or deceased). A removed individual is assumed to not be contagious.

It is known that \emph{asymptomatic} carriers of COVID-19 play a significant role
in the spread of the disease. It is estimated that as many as
\emph{40\% of infected individuals are asymptomatic}, i.e., they do not show any symptoms, but continue to infect their contacts until they become Removed. We assume a similar estimate in our analysis, i.e.,
we assume each infected individual is asymptomatic
with probability 0.4 (independently of others). We assume that asymptomatic carriers are as infectious as symptomatic carriers and become infectious
in a similar time frame (i.e., two days before the incubation period, though
they do not show symptoms).

\begin{table}[htb!]
    \begin{center}
        \begin{tabular}{p{0.2\linewidth}p{0.33\linewidth}p{0.33\linewidth}}
            Parameter                  & Definition                                                              & Value Used in our Analysis \\\toprule
            $T_p$                      & probability that a contagious person infects a single neighbor in a day & $\set{0.01, 0.1}$          \\\midrule
            \texttt{asymp}             & probability that an infected person is asymptomatic                     & $0.4$                      \\\midrule
            \texttt{incubation period} & number of days until symptoms develops from infection                   & 5                          \\\midrule
            $C$                        & number of contacts per person                                           & 13.4                       \\\midrule
            $D$                        & number of days an infected person remains infectious                    & 11                         \\
            \bottomrule
        \end{tabular}
    \end{center}
    \caption{COVID-19  parameters used in our analysis.}
    \label{tbl:parameters}
\end{table}

\vspace{-0.1in}
\subsection{Random mixing model}\label{sec:seir-model}

To study the robustness of our mechanisms across different models, besides the contact graph models
and its variants described earlier, we also consider a classical SEIR random mixing model.
The SEIR model tracks stages of a disease --- susceptible, exposed, infected, and recovered --- as the number of individuals in each stage. The evolution of each compartment is regulated by standard differential equations (see e.g.,~\cite{keeling-eames}).
We use a variation of the classical SEIR model
with a lognormal distribution for the incubation period.
The random mixing model allow each individual to have contact with random individuals which enables the possibility of larger values for the reproductive rate $R(t)$, since any susceptible individual can become exposed. By contrast, the contact graph model limits the susceptible population to the subset of nodes adjacent to infectious carriers.
One can relate $R(t)$ and $T_p$ as: $R(t) = T_p \times C \times D$.

\subsection{Real World Data}\label{subsec:real-world-data}
Finally, to provide some real-world evidence of the strength of our mechanisms, we utilize the results from~\cite{kissler}, specifically the Haslemere dataset of pairwise distances between volunteers over time. This dataset has already been used in modeling COVID-19~\cite{Firth2020}. 

The Haslemere dataset ``consists of the pairwise distances of up to 1m resolution between 469 volunteers from Haslemere, England, at five-minute intervals over three consecutive days (Thursday 12 Oct – Saturday 14 Oct, 2017)''~\cite{kissler}. We use this data set in our simulations  to  construct a contact graph wherein an edge exists between two participants if, at any time, they are a distance apart of 5 meters or less. 

\section{Theoretical Analysis of Mechanisms}\label{sec:theory}

We present a theoretical analysis assuming a simplified branching process model which is much easier to analyze than
the contact graph models defined earlier. The simplified model  ignores the underlying graph model
yet it gives useful insights to the efficacy of various scheduling mechanisms which are also validated by extensive experimental simulations on the various network models.

A main goal of our analysis is to study the efficacy of mechanisms
with respect to disease spread. In particular, given a
value of transmission rate $T_p$, we would like to discover  mechanisms (with good work ratio)  that can effectively control the spread of disease. The evolution of the disease
depends on the transmission rate $T_p$ (with greater values of $T_p$ corresponding to a greater rate of spread), the number of contacts per individual, the number of days a person is infectious, and the asymptomatic rate. It is important to note that our mechanisms employ symptomatic quarantine and hence symptomatic individuals
contribute less to the spread (since they are infectious for a short while
before symptoms appear, after which they become quarantined). However, asymptomatic carriers (who constitute about 40\% of infected individuals) play a major role in spreading the disease.
We analyze how the disease spreads under a given $(g,d,t)$  mechanism.

Our analysis uses the well-known Galton-Watson branching process (e.g., see Section~8.1~\cite{gw-branching}) to study  disease evolution.
In a branching process, each (infected) individual   independently infects
an $X$ number of individuals, where $X$ is a random variable (capturing
the reproductive rate) with a fixed distribution (with finite mean and variance).

In a branching process analysis, we start with an infected individual
and study how the number of infected individuals grow in each generation (see Figure \ref{fig:branchingtree}).
Table~\ref{tbl:parameters} lists the key parameters  (and their typical values) used in our analysis.

Consider an infected individual, say node $v$. By our modeling assumptions, $v$ will be symptomatic with probability $0.6$ and asymptomatic with probability $0.4$. Let $\mu_s$ (resp. $\mu_a$) be the expected number of people that are infected by a symptomatic (resp. asymptomatic) infected individual. %
Essentially, $\mu_s$ and $\mu_a$ are the effective reproductive rates $R(t)$ for the symptomatic and asymptomatic patient respectively.
Let $X_i$ be the random variable counting the number of infected people at the $i^{th}$ generation and $\expectation{X_i}$ be the expected value of $X_i$. Assuming $X_0 = 1$, $\expectation{X_1} = 0.6\mu_s + 0.4 \mu_a$, and in general, $\expectation{X_i} = (0.6\mu_s + 0.4 \mu_a)^i$, which follows from the branching process, See Figure
\ref{fig:branchingtree} (a formal argument can be found in~\cite{gw-branching}).

\begin{figure}[H]
    \centering
    \includegraphics{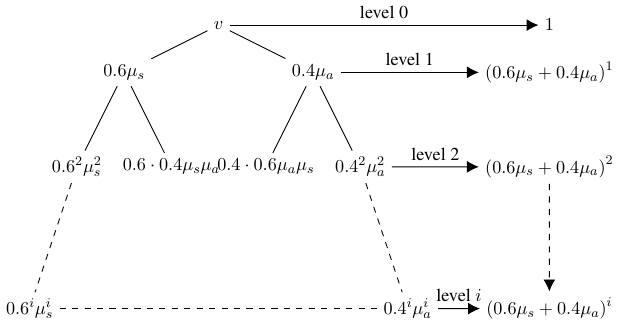}
    \caption{Illustrating the branching process analysis. The right hand side displays the total expected number of infected people at each level.}
    \label{fig:branchingtree}
\end{figure}

Thus, the expected total number of infected individuals is $\sum_{i=1}^{\infty} (0.6\mu_s + 0.4 \mu_a)^i$ is the sum of a geometric series with geometric mean $r=0.6\mu_s + 0.4\mu_a$.
Therefore, if $(0.6\mu_s + 0.4 \mu_a) < 1$, i.e., each individual infects less than one person on average, then $\expectation{X_i} \to 0$ as $i \to \infty$. In this case, the disease dies out eventually. On the other hand, if $0.6\mu_s + 0.4 \mu_a > 1$ (i.e., each individual infects more than one person on average), then $\expectation{X_i}$ grows exponentially. In this case, the disease-spread explodes and ultimately infects the entire population. Thus, we have the following theorem.

\begin{theorem}\label{thm:branching-dies}
    If $0.6\mu_s + 0.4 \mu_a < 1$ then the branching process dies out eventually, i.e., the disease eventually stops spreading.
\end{theorem}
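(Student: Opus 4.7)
The plan is to treat the process as a standard single-type Galton-Watson branching process whose offspring distribution is the mixture induced by the symptomatic/asymptomatic split, and then apply the classical subcritical extinction argument. First I would fix notation: let $Y$ denote the number of secondary infections caused by a single infected individual, so $Y$ equals (in distribution) an offspring count drawn from the symptomatic distribution with probability $0.6$ and from the asymptomatic distribution with probability $0.4$. By the tower rule, $\mu := E[Y] = 0.6\mu_s + 0.4\mu_a$, and the hypothesis of the theorem becomes $\mu < 1$.

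Next I would verify the formula $E[X_i] = \mu^i$ rigorously, which the paper has already asserted. The clean way is induction: condition on $X_{i-1}$ and on the identities of the infected individuals in generation $i-1$; since each infects a number of others that is an i.i.d.\ copy of $Y$ independent of everything in earlier generations, $E[X_i \mid X_{i-1}] = \mu \, X_{i-1}$, and taking expectations gives $E[X_i] = \mu \, E[X_{i-1}]$. With $X_0 = 1$ this yields $E[X_i] = \mu^i$.

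Now for extinction itself. Since $X_i$ is a nonnegative integer random variable, Markov's inequality gives
\[
\Pr(X_i \geq 1) \;\leq\; E[X_i] \;=\; \mu^i.
\]
Because $\mu < 1$, we have $\sum_{i=1}^{\infty} \Pr(X_i \geq 1) \leq \sum_{i=1}^{\infty} \mu^i = \mu/(1-\mu) < \infty$. By the Borel-Cantelli lemma, with probability $1$ the event $\{X_i \geq 1\}$ occurs for only finitely many generations $i$, so there exists a (random) finite generation $N$ such that $X_i = 0$ for all $i \geq N$. Since the process is monotone once it hits $0$ (no infected individuals means no future infections), this is precisely extinction. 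Equivalently, by monotone convergence the expected total number of infections is $\sum_{i \geq 0} \mu^i = 1/(1-\mu) < \infty$, so the total progeny is almost surely finite.

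The main (mild) obstacle is purely conceptual rather than technical: the theorem as stated says the process \emph{dies out}, which is an almost-sure statement, whereas the preceding display only shows $E[X_i] \to 0$. The bridge from vanishing expectations to almost-sure extinction is the Markov/Borel-Cantelli step above, and one should also note that the two-type (symptomatic vs.\ asymptomatic) flavor of the process is not really an obstacle, because each individual's type is drawn independently at the moment of infection, so the whole process is equivalent to a single-type Galton-Watson tree with offspring mean $\mu = 0.6\mu_s + 0.4\mu_a$. No assumption on the finiteness of $\mu_s, \mu_a$ beyond what is implicitly used in the branching process framework is needed.
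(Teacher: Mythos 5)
Your proposal is correct and follows essentially the same route as the paper: both arguments rest on the first-moment computation $E[X_i] = (0.6\mu_s + 0.4\mu_a)^i$ for the single-type Galton--Watson process obtained by mixing the symptomatic and asymptomatic offspring distributions, and conclude extinction in the subcritical case. The only difference is that you make rigorous the step the paper delegates to its branching-process reference, namely passing from $E[X_i] \to 0$ to almost-sure extinction via Markov's inequality (and Borel--Cantelli, though continuity of measure along the decreasing events $\{X_i \geq 1\}$ would already suffice).
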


Below we calculate the value of effective reproductive rate $0.6\mu_s + 0.4 \mu_a$ in general for a   mechanism with parameters $(g, d, t)$. Then we analyze different specific mechanisms (i.e., with specific values for $g,d,$ and $t$) to check  whether the disease dies out or not under the mechanism.

In the random mixing model, the  reproductive rate is $R(t) = T_p \times D \times C$, where  $T_p$ is the transmission probability, $D$ is  number of days a person remains infectious and $C$ is the number of contacts per person. By considering a schedule $(g, d, t)$, the above formula becomes:
\begin{align}\label{eqn:reproductiverate}
    R(t) = T_p \times D_g \times C_g,
\end{align}
where $C_g = C/g$ is the number of contacts in a random grouping of $g$ groups and $D_g$ is the number of days a person is infectious as determined by the schedule (note that if a group is not scheduled, then any person in that group is not considered infectious, even though they may be infected). The following lemmas show how we can compute the symptomatic and asymptomatic reproductive rates for a specific $(g,d,t)$ mechanism. The proofs (see~\ref{sec:branching}) are derived from Equation~\ref{eqn:reproductiverate} with appropriate calculation of $D_g$ and $C_g$.%

\begin{lemma}\label{lem:mu-s}
    Let $d' = \max\{0, \,\min\{d-2,\, 5\}\}$ and $d'_i = \max\{0,\, \min\{5-(gd + t)i, \, d\}\}$. Then the Symptomatic reproductive rate is given by:  $\mu_s = \frac{(13.4) T_p}{g} \cdot \left(d' + \sum_{i}d'_i\right)$
\end{lemma}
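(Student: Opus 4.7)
The plan is to apply Equation~(\ref{eqn:reproductiverate}) in the form $\mu_s = T_p \cdot D_s^g \cdot C^g$, where $C^g = 13.4/g$ is the average number of contacts per day for a person restricted to one of the $g$ randomly chosen groups, and $D_s^g$ is the effective number of days on which an infected symptomatic individual is both at work and contagious under the $(g,d,t)$ schedule. The factor $1/g$ in $C^g$ follows immediately from the uniform random partition of the population into $g$ groups, so the substance of the lemma is the computation of $D_s^g = d' + \sum_i d'_i$.

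First I would fix a timeline centered on the moment of infection. Using the disease-model assumptions spelled out in Section~\ref{sec:disease-model} --- contagious starting two days before the end of the incubation period, mean incubation of five days, and symptomatic quarantine as soon as symptoms appear --- I would identify the interval $[a,b]$ of calendar days after infection during which a symptomatic individual can still transmit the virus to a contact. The length of this window is responsible for the cap of $5$ that appears in both $d'$ and $d'_i$, and the offset of its left endpoint is what produces the ``$-2$'' in $d' = \min\{d-2, 5\}$.

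Next I would assume, without loss of generality, that day $0$ is the first day of the infected individual's current work cycle; any infection that occurs mid-break only loses some of the contagious days to the break and therefore shrinks the cycle-$0$ contribution without changing the form of the expression. The calendar then splits into cycles of length $gd + t$, where in cycle $i$ the person is at work on days $(gd+t)i,\ldots,(gd+t)i + d - 1$ and quarantined at home for the remaining $(g-1)d + t$ days. For each cycle I would compute the number of work days that lie inside the contagious window $[a,b]$: cycle $0$ contributes exactly $\max\{0, \min\{d-2, 5\}\} = d'$, and cycle $i \geq 1$ contributes exactly $\max\{0, \min\{5 - (gd+t)i,\, d\}\} = d'_i$, since by the start of cycle $i$ the remaining portion of the window has shrunk by $(gd+t)i$ days. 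These two identities I would verify by a short case analysis (cycle entirely inside the window, cycle straddling the right endpoint, cycle entirely after the window, and so on), using the outer $\max\{0,\cdot\}$ to absorb the empty cases.

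Finally, summing over all cycles gives $D_s^g = d' + \sum_{i \geq 1} d'_i$, and plugging into Equation~(\ref{eqn:reproductiverate}) together with $C^g = 13.4/g$ yields the claimed expression for $\mu_s$. The main obstacle I anticipate is purely bookkeeping --- the endpoint arithmetic at the left of the contagious window (the ``$-2$''), the right of the window (the ``$5$''), and the cycle-to-cycle transition --- rather than any conceptual difficulty; once $D_s^g$ is assembled, the rest is mechanical.
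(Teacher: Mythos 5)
Your proposal follows essentially the same route as the paper's proof: both invoke Equation~(\ref{eqn:reproductiverate}) with $C^g = 13.4/g$ from the random $g$-way partition, and both assemble $D^g$ cycle by cycle, with the first work stretch contributing $\max\{0,\min\{d-2,5\}\}$ (two non-infectious days, capped by the day-5 symptomatic quarantine) and cycle $i$ contributing $\max\{0,\min\{5-(gd+t)i,\,d\}\}$. Your version is somewhat more explicit about the case analysis at the window endpoints, but the decomposition and the bookkeeping are the paper's.
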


\begin{lemma}\label{lem:mu-a}
    Let $d''=\max\{0, \,\min\set{d-2, 11}\}$ and $d''_i=\max\{0, \,\min\set{11-(gd+t)i,d}\}$. Then our Asymptomatic reproductive rate is given by: $\mu_a = \frac{(13.4) T_p}{g}\cdot \left(d''+\sum_{i}d''_i\right)$
\end{lemma}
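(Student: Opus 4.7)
The plan is to derive the expression directly from the random-mixing reproductive-rate formula $R(t) = T_p \times D \times C$ applied to the schedule $(g,d,t)$, exactly as in equation~(\ref{eqn:reproductiverate}). Concretely, I will write $\mu_a = T_p \cdot D_a^g \cdot C^g$, where $C^g$ is the effective per-day contact count of an asymptomatic carrier with members of their own group and $D_a^g$ is the number of days within the $11$-day asymptomatic contagious window that the carrier is actually scheduled to be at work. The formulation mirrors the proof of Lemma~\ref{lem:mu-s}, with the only substantive difference being that an asymptomatic individual is never removed by symptomatic quarantine, so the contagious-window length $11$ replaces the $5$ used for the symptomatic case.

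For the contact term, since the population is partitioned uniformly at random into $g$ groups and, by the mechanism, a group interacts with no other group, the expected number of a given person's daily contacts who lie inside their own group is $C/g = 13.4/g$. This factor will become the prefactor $(13.4)T_p/g$ of the bracketed quantity in the final formula. What remains is to identify the bracketed expression with $D_a^g$.

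To compute $D_a^g$, I decompose it over successive work periods of length $d$ separated by gaps of length $(g-1)d+t$, so the $i$-th work period begins at day $(gd+t)\,i$ relative to the moment of infection (taken as the start of the zeroth period, the canonical case against which the branching-process reproductive rate is measured). In the zeroth period, the carrier is in the non-contagious exposed phase for the first two days and then contagious for the remaining $d-2$ days of that period, capped by the total contagious budget of $11$ days and floored at $0$; this yields $d'' = \max\{0,\min\{d-2,11\}\}$. For each $i \geq 1$, the carrier returns to work after $(gd+t)\,i$ days have already elapsed against the $11$-day budget, so the in-group contagious days contributed by that period are the overlap $\max\{0,\min\{11-(gd+t)\,i,\;d\}\} = d''_i$, which is automatically $0$ once $(gd+t)\,i \geq 11$, making the sum effectively finite. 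Summing these contributions and multiplying by $T_p \cdot C^g$ gives the claimed expression.

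The main obstacle is the bookkeeping: one must verify that $d'' + \sum_{i\geq 1} d''_i$ partitions the $11$-day contagious window into in-group and at-home portions consistently across all parameter regimes, including edge cases such as $d \leq 2$ (the carrier never becomes contagious while at work in the first cycle), large $d$ (the entire contagious window is exhausted in a single period), and schedules in which $(gd+t) \geq 11$ (only the first-period term survives). Once this accounting is checked, the lemma follows by direct substitution into equation~(\ref{eqn:reproductiverate}), with $C^g = 13.4/g$ and $D^g$ replaced by the asymptomatic quantity $d'' + \sum_i d''_i$.
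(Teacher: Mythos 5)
Your proposal is correct and follows essentially the same route as the paper: both derive $\mu_a$ from $R(t) = T_p \cdot D^g \cdot C^g$ with $C^g = 13.4/g$ from the random partition into $g$ groups, and both compute $D^g$ by charging $d-2$ contagious work days to the first cycle (two-day non-contagious exposed phase) and $\max\{0,\min\{11-(gd+t)i,\,d\}\}$ to each subsequent cycle, capped by the $11$-day infectious window. Your treatment of the edge cases is, if anything, slightly more careful than the paper's.
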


Thus, one can calculate $0.6\mu_s + 0.4 \mu_a$ from the above formulas. Table~\ref{tbl:muvalues} demonstrates these valuse for different mechanisms for
two different $T_p$ values --- 0.1 (high) and 0.01 (low). If the value of
$0.6\mu_s + 0.4\mu_a < 1$ for a mechanism and a $T_p$ value, then the disease dies out; moreover, the closer this value is to 0, the faster the disease dies out (and ends up infecting a smaller fraction of the population). These predictions are validated by simulation results in the various models described in Section
\ref{sec:model} --- network-based, random mixing,
and real-world network (Section~\ref{sec:results}).

\begin{table}[H]
    \centering
    \begin{tabular}{lllll}
               & \multicolumn{4}{c}{Schedule}                                                                                              \\\cmidrule{2-5}
        $T_p$  & $(1, 5, 2)$                  & $(2, 5, 0)$                  & ($3, 3, 0)$                  & $(4, 4, 0)$                  \\\midrule
        $0.01$ & \colorbox{green!50}{$0.616$} & \colorbox{green!50}{$0.228$} & \colorbox{green!50}{$0.080$} & \colorbox{green!50}{$0.067$} \\\midrule
        $0.10$ & \colorbox{red!50}{$6.164$}   & \colorbox{red!50}{$2.278$}   & \colorbox{green!50}{$0.804$} & \colorbox{green!50}{$0.670$} \\\bottomrule
    \end{tabular}
    \caption{Values of $0.6\mu_s + 0.4\mu_a$ for different values of $T_p$ and schedules $(g, d, t)$. A green box indicates that the process eventually dies out, whereas a red box indicates continued growth.}
    \label{tbl:muvalues}
\end{table}

\section{Simulation Results}\label{sec:results}
To study the efficacy of our mechanisms,  we also conduct extensive  simulations across various models and parameters.
For example, given a particular mechanism, say $(2,4,0)$, we simulate the disease spread under
this mechanism under various models (the contact graph model and its variants,  the random mixing
model, and the Haslemere real-world data model). Under each model, we vary the following parameters to study their effects. We vary transmission probability $T_p$ (which captures the rate of infection spread
in the population),  the number of index patients (which captures the percentage of
individuals currently infected among the population), and the percentage of asymptomatic carriers (we assume this to be 40\% according to current estimates~\cite{cdc}, but also simulate with other values).
We then compare the disease spread, under the \emph{same set of respective parameters},  to three baseline mechanisms --- the basic model (where the disease
spreads without any intervention), symptomatic quarantine or the $(1,1,0)$ schedule (where infected individuals are quarantined after exhibiting symptoms), and the $(1,5,2)$ schedule, which is the normal 5-day work week with symptomatic quarantine (note that in the latter two mechanisms there is only one group). The key metric of comparison is the flattening ratio ---
the ratio of the peak number of cases of the mechanism under consideration (say $(2,4,0)$)
to that of the baseline mechanisms. We also compare the total number of infections.

We analyze a number of group scheduling mechanisms that showcase the trade off between the work ratio (WR) and the disease spread. We categorize them into three broad groups as:
\begin{enumerate*}[label=(\roman*)]
    \item high-WR mechanisms ($\sim 70\%$ work ratio),
    \item mid-WR mechanisms ($40 - 50\%$), and
    \item low-WR mechanisms (about $30\%$).
\end{enumerate*}

Our results are summarized in Figures~\ref{fig:geometricplots}, ~\ref{fig:seirplots}, 
\ref{fig:realworldplots},
and Table~\ref{tbl:comparison}. (The figures ~\ref{fig:seirplots}, 
\ref{fig:realworldplots} are placed in the appendix.) These particular results assume a population of
50,000 (the typical population in a large university) and the number of index patients is 3\% of the population (i.e., 3\% of the population is initially infected
as  estimated, say, in Harris County, TX in August 2020 \cite{covid-19-projection}). The simulation model is the random geometric model (Figure~\ref{fig:geometricplots}) with the contact distribution as described in Section~\ref{sec:model}, for the random mixing model (Figure~\ref{fig:seirplots}), and for the Real-World Data model (Figure~\ref{fig:realworldplots}). We adopt the COVID-19 disease parameters as described in Section~\ref{sec:model}.  \emph{Our results are qualitatively similar across the various models
including  other variants of the contact graph model, the random mixing model, and the real-world data model.}
We have simulated higher populations (up to 100,000 in the contact graph model, and up to million in the random mixing model) and have varied the number of
index patients. More importantly, we have analyzed a wide variety of $T_p$ values --- here we consider
two canonical $T_p$ values --- $0.1$  and
$0.01$  --- these capture high and low reproductive rates respectively (cf.
Section~\ref{sec:model}).

Our results can be summarized as follows (in particular, see Figure~\ref{fig:geometricplots} and Table~\ref{tbl:comparison}).
The canonical example for the high-WR category is the $(2, 5, 0)$ schedule, which achieves a $(1,5,2)$ flattening ratio as low as 12\% (i.e., the ratio
of the peak number of cases is 12\% compared to that of the standard work week schedule) even when $T_p = 0.1$.
When $T_p$ is lower, say around $0.01$  the flattening ratio
becomes much lower. (In general, the lower the $T_p$, the better the flattening, in general, for any given mechanism.)
In fact, several mechanisms of the form $(2,d,0)$ yield the same work ratio and essentially the same flattening ratio for \(d \geq 4\).
In the mid-WR category,  we get even better flattening ratios. The canonical example for this case is \((3,3,0)\) yielding a work ratio of 46\%, but with the flattening ratio down to 4\% even under $T_p = 0.1$.
For mechanisms in low-WR category, such as $(4,4,0)$ with still a reasonable work ratio of 35\%, the flattening is down to about 1\%.
The mid-WR and low-WR mechanisms are attractive when $T_p$ values are high,
as they not only lead to a low flattening ration, but also low number
of peak cases (in absolute numbers) with respect to the total population.
Moreover, the total number of infections is a small fraction of the total population.

\begin{figure}
   \includegraphics{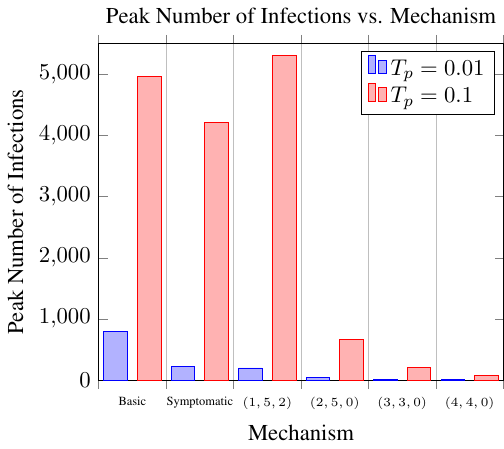}
   \includegraphics{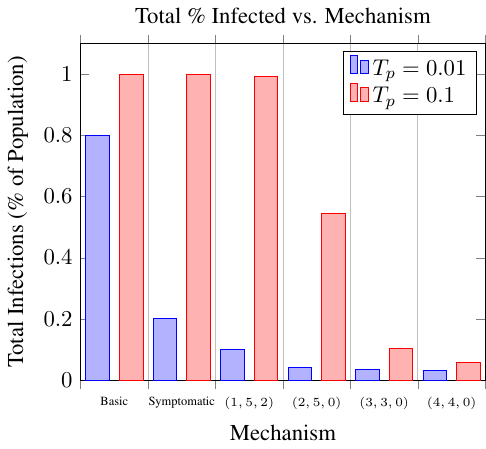}
   \includegraphics{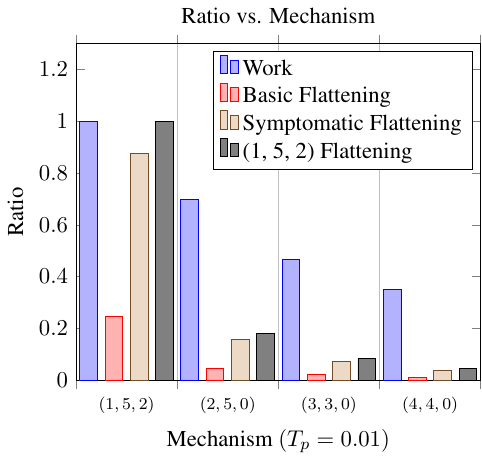}
   \includegraphics{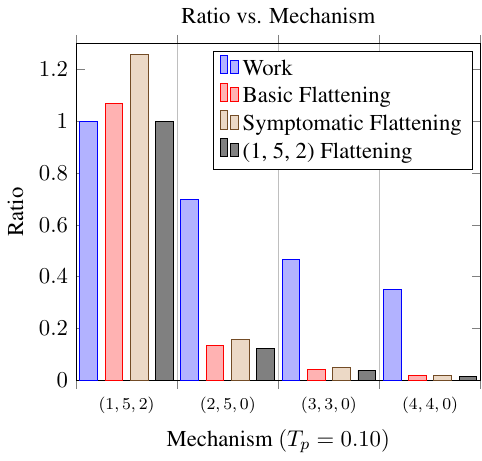}
   \includegraphics{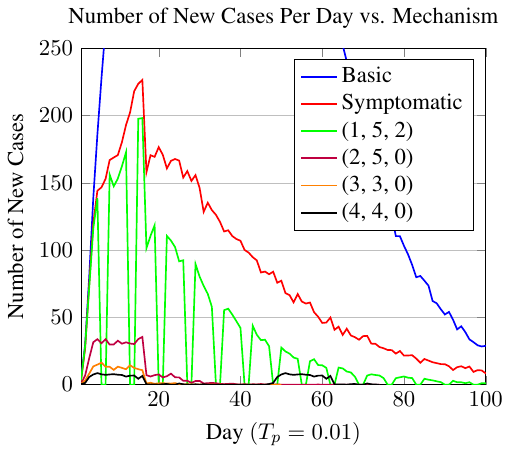}
   \includegraphics{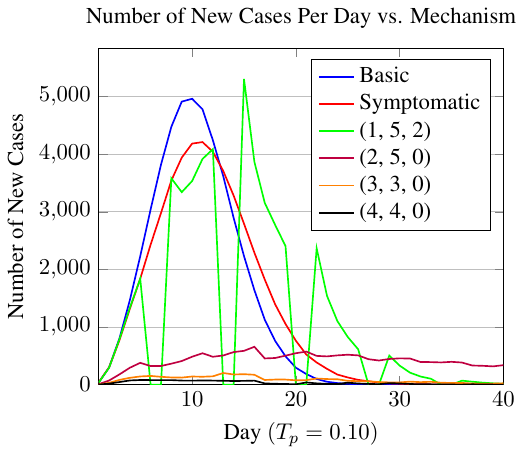}
    \vspace*{-8mm}
    \caption{Plots displaying the performance of different mechanisms for Contact Graph Model.}
    \label{fig:geometricplots}
\end{figure}

\begin{center}
    \begin{table}
        \begin{tabular}{lccccccccc}\toprule
                                    &               &        & \multicolumn{3}{c}{$T_p=0.01$} &        & \multicolumn{3}{c}{$T_p=0.1$}                                     \\\cmidrule{4-6}\cmidrule{8-10}
            WR Category             & Schedule      & WR     & Basic                          & Sympt. & \((1, 5, 2)\)                 &  & Basic & Sympt. & \((1, 5, 2)\) \\\toprule
            \multirow{2.5}{*}{Full} & \((1,5,2)\)   & 100\%  & 23\%                           & 80\%   &                               &  & 107\% & 127\%  &               \\\cmidrule{4-6}\cmidrule{8-10}
                                    & \((1, 1, 0)\) & 140\%  & 29\%                           &        & 124\%                         &  & 84\%  &        & 79\%          \\\midrule
            \multirow{2.5}{*}{High} & \((2,4,0)\)   & 70\%   & 4\%                            & 14\%   & 18\%                          &  & 14\%  & 17\%   & 13\%          \\\cmidrule{4-6}\cmidrule{8-10}
                                    & \((2, 5, 0)\) & 70\%   & 4\%                            & 14\%   & 18\%                          &  & 12\%  & 15\%   & 12\%          \\\midrule
            \multirow{4}{*}{Mid}    & \((2, 3, 2)\) & 53\%   & 4\%                            & 15\%   & 18\%                          &  & 11\%  & 14\%   & 11\%          \\\cmidrule{4-6}\cmidrule{8-10}
                                    & \((2, 3, 3)\) & 46.7\% & 4\%                            & 14\%   & 18\%                          &  & 11\%  & 13\%   & 10\%          \\\cmidrule{4-6}\cmidrule{8-10}
                                    & \((3, 3, 0)\) & 46.7\% & 2\%                            & 7\%    & 8\%                           &  & 4\%   & 5\%    & 4\%           \\\midrule
            Low                     & \((4, 4, 0)\) & 35\%   & 1\%                            & 3\%    & 4\%                           &  & 2\%   & 2\%    & 1\%           \\\bottomrule
        \end{tabular}
        \caption{Work and flattening ratios for various schedules against the Basic, Symptomatic Quarantine, and $(1, 5, 2)$ mechanisms for $T_p$ values of $0.01$ and $0.1$ in the contact graph model. A higher work ratio indicates more in-person activity, whereas a lower flattening ratio indicates a lower peak number of new cases per day.}
        \label{tbl:comparison}
    \end{table}
\end{center}

\section{Related Work}

The results on epidemiological studies is vast and here we focus on the literature that is
most relevant to our work on Covid-19 and similar strategies.

Our work closely resembles several recent papers on cyclical strategies~\cite{karin2020adaptive,alon2020covid,EGS21} and alternating strategies~\cite{M21}. A typical cyclical strategy parameterized by $1\le k<14$ views 14 days as one work cycle and stipulates  that the population works for $k$ days followed by $14-k$ days of lockdown. Our work is complementary to their works. The key difference is that our group scheduling mechanisms benefit significantly from the reduced number of  contacts owing to the partitioning of the population whereas the cyclical mechanisms typically do not. Moreover, our approach to modeling in-person activity  is much simpler than the works by~\cite{alon2020covid,EGS21} which model economic impact and quite likely to be more intuitive and easy for policy makers to reason about and compare alternatives.

A number of works study specific mechanisms that resemble our work, but few provide the trade off analysis between productivity and infection spread that is crucial for policymakers. For example,~\cite{karin2020adaptive} briefly discusses a staggered form of cyclical strategies. In \cite{M21}, the authors focus their efforts on an alternating quarantine mechanism whereby they partition the population into two groups and schedule activities alternating each week between the two groups. Being primarily focused either on typical cyclical strategies~\cite{karin2020adaptive} or alternating quarantine~\cite{M21}, neither of these works analyze the trade off with respect to productivity and the rate at which the disease spreads.   ~\cite{alon2020covid} provides a detailed model that studies the trade off between the effectiveness of typical cyclical strategies and the economy. However, we believe that our group scheduling mechanisms can provide significantly better trade off. There has been a flurry of recent works like~\cite{acemoglu2020multi,brotherhood2020economic,eichenbaum2020macroeconomics} that study other mechanisms like social distancing and targeted lockdowns in the context of their related economic impact.

\section{Discussion and Summary}

We studied group scheduling mechanisms and demonstrated their efficacy  both theoretically and experimentally. 
To gain theoretical insight, we considered a simple branching process model and analyzed how the disease spreads as a function of the mechanism parameters ($g,d,$ and $t$) and the COVID-19 disease parameters. Our analysis is helpful in determining what mechanisms are likely to work well. We also conducted extensive simulations of our mechanisms under various epidemiological models and compare their performance with baseline mechanisms. 

\emph{We show that our mechanisms gives qualitatively similar benefits regardless of the specific models used in our simulations as well as the specific choice of parameters of the respective models.} For example, given a particular mechanism, say (2,4,0), we simulate the disease spread under this mechanism under various epidemiological models. Under each model, we vary the following parameters to study their effects. We vary transmission probability  (which captures the rate of infection spread in the population), the number of index patients (which captures the percentage of individuals currently infected among the population), and the percentage of asymptomatic carriers (we assume this to be 40\% according to current estimates, but we also try other values). We then compare the disease spread, under the \emph{same set of respective parameters}, to three baseline mechanisms —-- the basic model (where the disease spreads without any intervention), symptomatic quarantine or the (1,1,0) schedule (where infected individuals are quarantined after exhibiting symptoms), and the (1,5,2) schedule, which is the normal 5-day work week with symptomatic quarantine (note that in the latter two mechanisms there is only one group). The key metric of comparison is the flattening ratio — the ratio of the peak number of cases of the mechanism under consideration (say (2,4,0)) to that of the baseline mechanisms. We also compare the total number of infections as well.

We analyze a number of group scheduling mechanisms that showcase the trade off between in-person work ratio and the disease spread. We categorize them into three broad groups as:
1. High-WR mechanisms (70\% work ratio),
2. Mid-WR mechanisms (40–50\%), and
3. Low-WR mechanisms (about 30\%).
The canonical example for the high-WR category is the (2, 5, 0) schedule, which achieves a (1,5,2) flattening ratio as low as 12\%. The mid-WR and low-WR mechanisms are attractive when the transmission rates  are high, as they not only lead to a low flattening ratio, but also low number of peak cases (in absolute numbers) with respect to the total population. Moreover, the total number of infections is a small fraction of the total population and the infection dies off quickly in the population.

Regarding the simulations, we point that the models that we use are to \emph{compare} different mechanisms. To show the robustness of our results, we consider several models (and several settings of various parameters in each model)  as well as a real-world network model (based on face-to-face interactions in a group of people) and in all the models we compare group scheduling mechanisms with other baseline mechanisms and show the benefit of group scheduling. For example, one of the models is the SEIR random mixing model, a standard model in epidemiology. We believe that although the numbers that we predict such as the flattening ratio may not exactly match in the real world, we believe that group scheduling will still yield significant benefit.

Our mechanisms take a principled approach to disease control that interpolates between extreme measures — lockdowns on one hand that severely cripple the economy and a “herd immunity” approach that advocates normal behavior for most people (except the most vulnerable). The latter approach, though it helps economic activity, has the danger that even younger or middle-aged people who apparently are less vulnerable can still get the disease in severe form (and even die), and this can happen in large numbers, overwhelming the health care system.

We note  that the group scheduling mechanisms assume that the population is randomly partitioned into $g$ groups. Random partitioning is important to justify that the average number of contacts each person has will go down by a factor of $g$ on average. This kind of partitioning is more applicable to structured settings like schools and workplaces. Our focus is mainly on such settings (even in our simulation we assume about 50,000 people, a typical population in a large public university).

For convenience, we have categorized our mechanisms based on their work ratios. The low end of this spectrum provides the best flattening ratio. So naturally, when cases surge (characterized by a higher transmission probability or reproductive rate) or when easing out of complete lockdown, we may wish to opt for the low work ratio options that have an improved flattening ratio. As case numbers decrease, we have a couple of strategies that we can choose between. On the one hand, low case numbers afford us the ability to contact trace more effectively and thereby stop the spread. On the other hand, we can also move to mechanisms with improved work ratio. 

\emph{An important point to note is that if the transmission probability (reproductive rate) is reduced, then the efficacy of group scheduling is increased even further.} Thus reducing the transmission probability by following public health guidelines like wearing masks, social distancing, and hand washing will be very beneficial. 
Moreover, following these guidelines can increases the efficacy of the mechanisms, allowing deployment of high-WR mechanisms as well.
More recently, with increased availability of vaccines, most countries have initiated vaccination drives and have managed to vaccinate significant fractions of their population. Unfortunately, due to the slow production rate and other logistical issues, the extent of vaccination coverage varies widely across countries and regions within countries. Policymakers can factor in the reduced transmission rate based on the extent of vaccination coverage and consider operating at a higher work-ratio mechanism.

We are guided by two main principles. Firstly, our mechanisms — by partitioning the population into groups — reduce the number of contacts per person. Secondly, we schedule each group with sufficient break time so that a large number of infected people become symptomatic (and therefore quarantined) during their breaks.
We believe that policy makers can be guided by these principles and adapt our mechanisms, taking their specific local considerations into account. In general, a policy A  (that builds on another policy B) and either decreases the number of contacts further and/or improves the probability that people become symptomatic during their break will dominate over B and lead to better flattening. This means that adding common work-breaks — e.g., Sundays off — is likely to improve the flattening ratio and unlikely to worsen it. 

Policy makers can also use the above two principles to address scenarios that we have not addressed directly. Shift workers, for example, may need to work on a more fine-grained schedule. Consider a shift schedule that requires two shifts per day. We could consider four groups with each alternating between four working days and four off days. The groups may be staggered so that — on any given day — two groups are working and two are off.

Our model can be extended in many ways. We currently focus on social interactions at the workplace, but we believe our results will qualitatively extend even when we consider social interactions at home. Moreover, our current model does not explicitly consider age-related factors in terms of transmission rate and death rates. It is now quite well-established that younger people fare better on both counts. So including these details into our model and scheduling mechanisms may help to further improve the trade-off between productivity and disease spread.

\section{Code Availability and Data}
Code and data for simulations and results are available at \url{https://github.com/khalid-salad/covimulation/}.

\section{Acknowledgements}

 John Augustine's research is supported in part by an Extra-Mural Research Grant (file number EMR/2016/003016) funded by the Science and Engineering Research Board, Department of Science and Technology, Government of India and by the VAJRA faculty program of the Government of India.
 
 \noindent Anisur Rahaman Molla's research supported by DST Inspire Faculty research grant DST/INSPIRE/04/2015/002801.

 \noindent Gopal Pandurangan's research is supported, in part, by NSF grants  IIS-1633720,  CCF-BSF-1717075, CCF-1540512,  US-Israel BSF award 2016419, and by the VAJRA faculty program of the Government of India.

\printbibliography

\newpage
\appendix
\section{Appendix}

\subsection{Formula for In-person Work Ratio}\label{sec:formulaforer}

As defined in Section~\ref{def:economicratio}, the \emph{in-person work} ratio
or simply the \emph{work} ratio
is the \emph{ratio} of the number of (in-person) working hours of an individual under our mechanism to
the number of working hours under the standard five-day working week (i.e., the $(1,5,2)$ schedule),  assuming that an individual works (in-person) for a fixed number of hours (say, 8) on each day they are working.

\begin{theorem}
    \label{thm:formulaforer}
    The work ratio for the \((g, d, t)\) mechanism is $\frac{7}{5}\left(\frac{d}{gd + t}\right)$.
\end{theorem}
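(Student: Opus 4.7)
The plan is to compute, for each schedule, the long-run fraction of calendar days on which a given individual works, and then to take the ratio of those fractions. Under the assumption that every working day contributes the same fixed number of hours (say 8), the ratio of total working hours equals the ratio of working days, so this reduces the whole problem to simple counting.

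First I would unroll one full period of a $(g,d,t)$ schedule. By definition, the $g$ groups are scheduled consecutively for $d$ days each, after which the entire population is quarantined for $t$ days before the pattern repeats. Hence one period has length $gd+t$ days. Every individual belongs to exactly one group, and within a single period her group is scheduled on exactly $d$ of those days, so she works $d$ days out of every $gd+t$. The per-day working fraction for any individual is therefore $d/(gd+t)$.

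Next I would perform the same computation for the baseline $(1,5,2)$ schedule: the single group works $5$ out of every $7$ days, giving a per-day working fraction of $5/7$. Dividing the two fractions yields
\[
\frac{d/(gd+t)}{5/7} \;=\; \frac{7}{5}\cdot\frac{d}{gd+t},
\]
which is exactly the claimed formula. Multiplying numerators and denominators by the (identical) constant number of hours per working day leaves the ratio unchanged, so the same expression gives the ratio of working hours.

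There is no real obstacle in this argument; the only point worth being explicit about is that the ``average number of working hours'' is interpreted as a long-run time average. This is well defined because the schedule is periodic with period $gd+t$ and, within any single period, each individual plays exactly the same role inside her own group, so the per-individual long-run average is both well-defined and equal to the population average. With that observation in place the theorem follows immediately from the two fraction computations above.
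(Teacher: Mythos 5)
Your proposal is correct and follows essentially the same argument as the paper: both identify one period of the schedule as $gd+t$ days during which each individual works exactly $d$ days, compute the baseline fraction $5/7$ for the $(1,5,2)$ schedule, and take the ratio. The only difference is that you are slightly more explicit about the long-run time-average interpretation, which is a harmless refinement.
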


\begin{proof}
    Consider a group schedule \((g, d, t)\).
    We note that a member of group \(i\)
    will wait \((g - 1)d + t\) days quarantined after their
    final day unquarantined in a given cycle
    \[\underbrace{0, 0, \hdots, 0}_{d\text{ days}}\underbrace{1, 1\hdots, 1\hdots g - 1, g - 1\hdots g - 1}_{(g - 1)d\text{ days }}\underbrace{\emptyset\emptyset\hdots\emptyset}_{t\text{ days}}\]
    In such a cycle, there are \(gd + t\) days and each group  works \(d\) days.
    Assuming that each day one works for a fixed number of hours,  the fraction of the work hours   to the total
    number of work hours possible is
    \(\frac{d}{gd + t}\).
    For a standard $(1,5,2)$ schedule (5-day work week), the above ratio is 5/7.
    Hence the work ratio
    for a $(g,d,t)$ schedule is $\frac{7}{5}\left(\frac{d}{gd + t}\right)$.
\end{proof}

\subsection{Relating Reproductive Rate to Transmission Probability}

Let $C$ be the (average) number of contacts per person (we assume this value to be 13.4 throughout this paper based on~\cite{2008plos} data)  and $D$
be the  number of days a person remains infectious (we assume in this paper that this number is  11 days for COVID-19~\cite{imperialcollege}).

\paragraph{Contact graph model}
For the Contact Graph Model, one can approximately relate $T_p$ and $R(t)$ as follows:
\[R(t) = (1 - (1-T_p)^{D}) \times C \]

That is, since each individual $u$ can infect a given neighbor $v$ each day with probability
$T_p$ and hence not infect with probability $1 - T_p$, the
probability that $v$ is \emph{not} infected after $D$ days
is $(1-T_p)^{D}$. Thus, the probability that $v$ is infected during
the course of $u$'s infection is $1 - (1-T_p)^D$. Thus, on average, an individual $u$
infects $(1 - (1 - T_p)^D) \times C$ individuals.

\paragraph{Random Mixing Model}
For the Random Mixing Model, one can approximately relate $T_p$ and $R(t)$ as follows:
\[R(t) = T_p \times C \times D \]

That is, we assume that
an individual has $C$ new contacts on average per day on \emph{each} of
the $D$ days. Since each contact has probability $T_p$ of being infected,
the formula follows.\footnote{Strictly speaking
    some contacts can be repeated, but since $D$ is much less than the population, we assume random sampling from the population has few repeats.}

\subsection{Omitted  Proofs of Section 4}\label{sec:branching}

\paragraph{Proof of Lemma~\ref{lem:mu-s}}
\begin{proof}
    If an infected individual $v$ is symptomatic, then
    $$D^g = \max\{0, \,\min\{(d-2),\, 5\}\} + \max\{0, \,\sum_{i=1, 2, \dots}\min\{5 -(d + (g-1)d + t)i, \, d\}\}.$$

    The first term $\min\{(d-2),\, 5\}$ comes from the fact that the group (which goes out first) is not infectious for the first two days and that $d-2$ cannot be more than $5$ since a symptomatic patient can be put in quarantine on the $5$th day (see Section~\ref{sec:model}). Also it cannot be negative, so we take the maximum between $0$ and $\min\{(d-2),\, 5\}$. The second term shows the repeated process of all the groups going out for work. The term $5-(d + (g-1)d + t) = 5 - (gd + t)$ comes from the schedule equation (Theorem~\ref{thm:formulaforer}) and the fact that a symptomatic patient can be put in quarantine on the $5$th day. Also it cannot be negative and hence the maximum function.

    Also it is easy to see that $C^g = C/g = 13.4/g$, since we assume
    $C = 13.4$. Thus we get from Equation~\ref{eqn:reproductiverate}:

    \begin{align*}
         & \mu_s  = T_p \cdot D^g \cdot C^g                                  \\
         & = T_p \cdot \left(d' + \sum_{i} d'_i \right) \cdot \frac{13.4}{g} \\
         & = \frac{(13.4) T_p}{g} \cdot \left(d' + \sum_{i}d'_i\right)
    \end{align*}
\end{proof}

\paragraph{Proof of Lemma~\ref{lem:mu-a}}
\begin{proof}
    If $v$ is asymptomatic, then $D^g$ can be calculated as follows:
    $$D^g = \max\{0, \,\min\set{d-2, 11}\} + \max\{0, \,\sum_{i=1, 2, \dots}\min\set{11-(d + (g-1)d + t)i, d}\}.$$

    The first term $\min\set{d-2,\, 11}$ comes from the fact that the group (which goes out first) is not infectious for the first two days and that $d-2$ cannot be more than $11$ as we assume that a person can not be infectious for more than $11$ days (see Section~\ref{sec:model}). The second term shows the repeated process of all the groups going out for work. The term $11-(d + (g-1)d + t)=11 - (gd + t)$ comes from the equation (Theorem~\ref{thm:formulaforer}) and the fact that a person can not be infectious for more than $11$ days. Both the terms cannot be negative; so the maximum function is applied before them.

    $C^g$ is fixed, i.e.,  $C^g = C/g = 13.4/g$. Thus we get from Equation~\ref{eqn:reproductiverate}:

    \begin{align*}\mu_a
         & = T_p \cdot D^g \cdot C^g                              \\
         & = T_p \left(d'' + \sum_{i}d''_i\right) \frac{13.4}{g}  \\
         & = \frac{(13.4) T_p}{g} \left(d'' + \sum_i d''_i\right)
    \end{align*}
\end{proof}

\begin{figure}
    \includegraphics{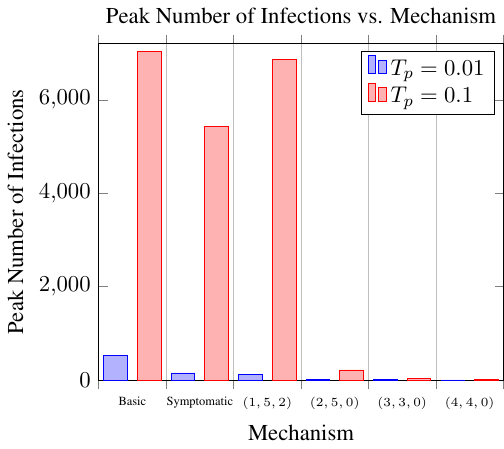}
    \includegraphics{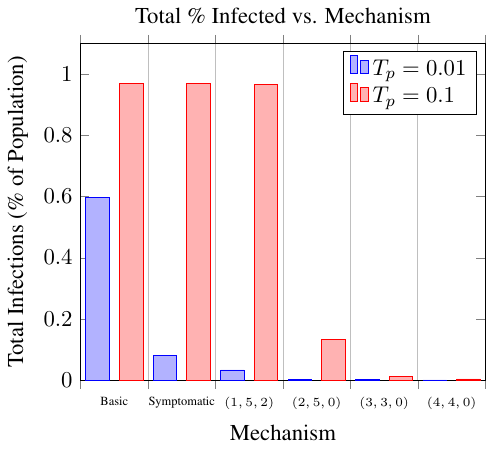}
    \includegraphics{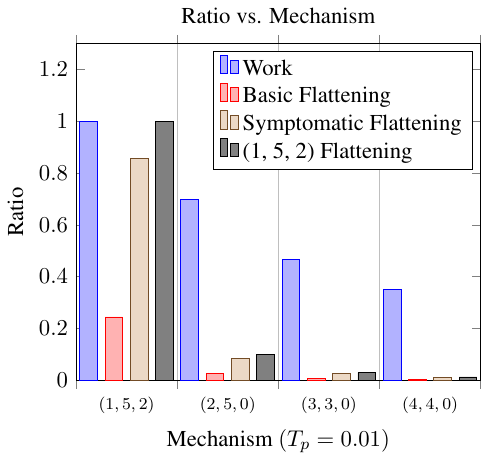}
    \includegraphics{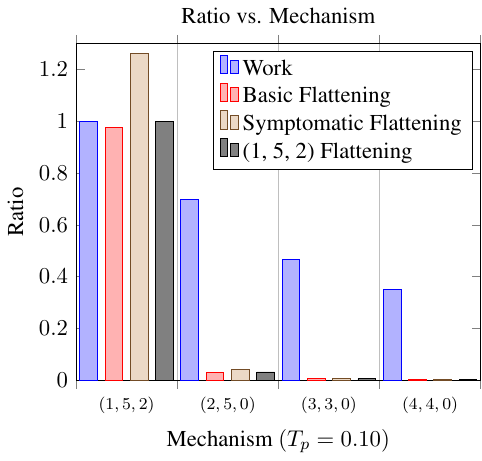}
    \includegraphics{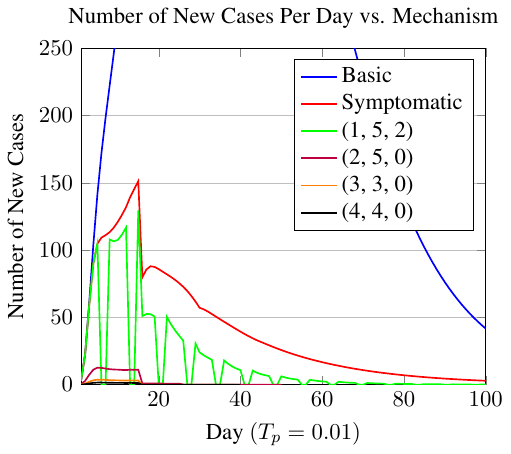}
    \includegraphics{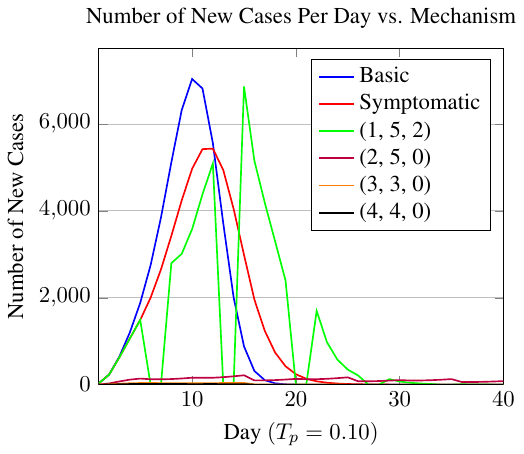}
    \vspace*{-8mm}
    \caption{Plots displaying the performance of different mechanisms for the random mixing Model.}
    \label{fig:seirplots}
\end{figure}
\begin{figure}
    \includegraphics{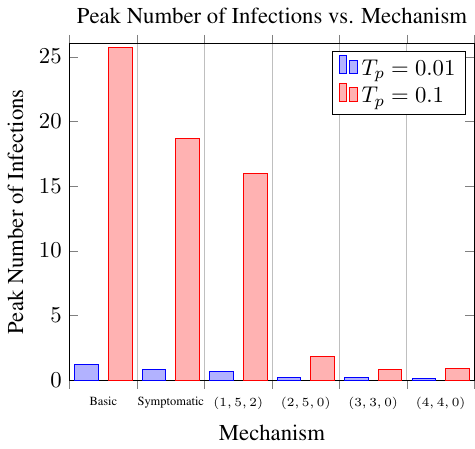}
    \includegraphics{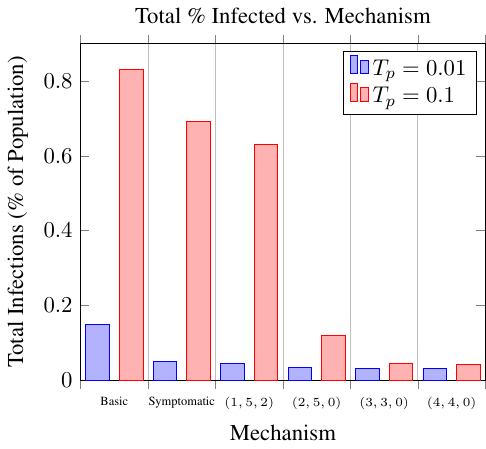}
    \includegraphics{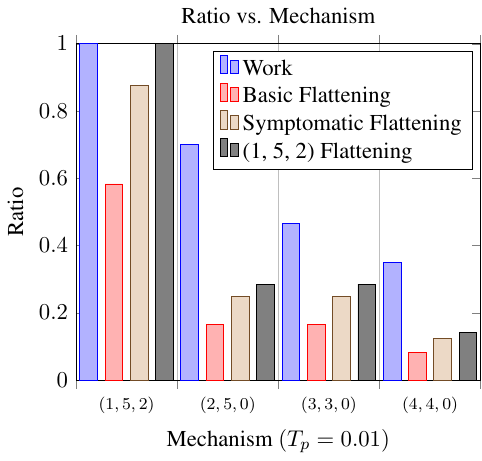}
    \includegraphics{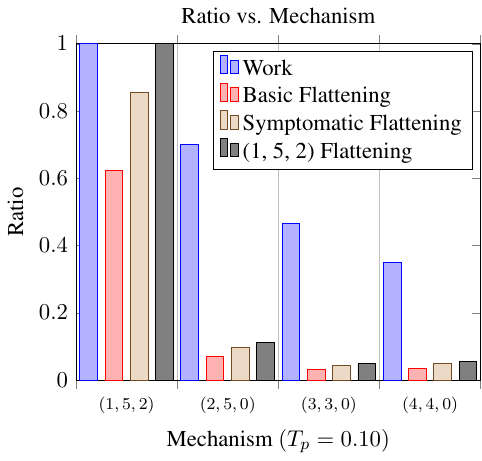}
    \includegraphics{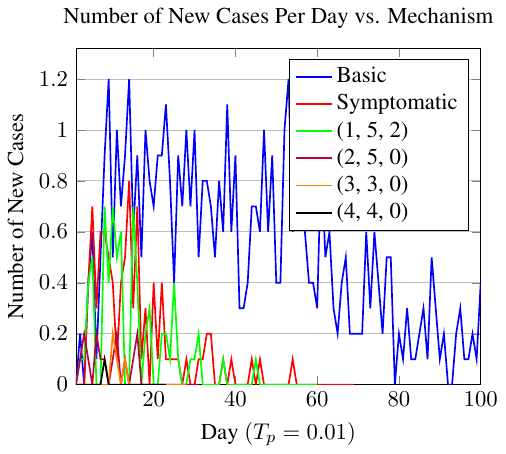}
    \includegraphics{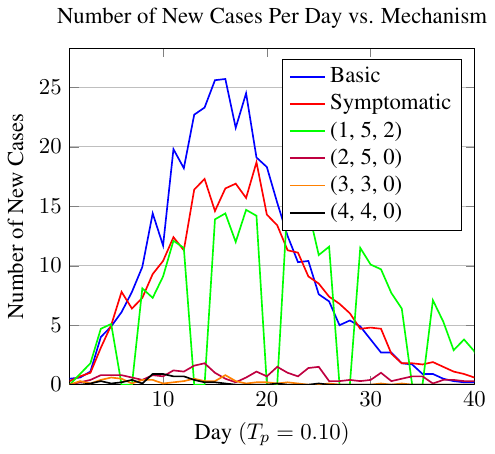}  
    \vspace*{-8mm}
    \caption{Plots displaying the performance of different mechanisms for modified Haslemere dataset.}
    \label{fig:realworldplots}
\end{figure}

\end{document}